  \providecommand\BibTeX{{%
    \normalfont B\kern-0.5em{\scshape i\kern-0.25em b}\kern-0.8em\TeX}}}
\def\eqref#1{equation~\ref{#1}}
\def\1{\bm{1}}
\DeclareMathAlphabet{\mathsfit}{\encodingdefault}{\sfdefault}{m}{sl}
\SetMathAlphabet{\mathsfit}{bold}{\encodingdefault}{\sfdefault}{bx}{n}
\newcommand\norm[1]{\left\lVert#1\right\rVert}
\newcommand{\pgr}{PG-Ret\xspace}
\newtheorem{assumption}{Assumption}
\newtheorem{theorem}{Theorem}
\begin{document}

\title{
Optimizing Novelty of Top-k Recommendations using Large Language Models and Reinforcement Learning
}



\author{Amit Sharma}
\affiliation{%
  \institution{Microsoft Research}
  \city{Bengaluru}
  \country{India}
}
\email{amshar@microsoft.com}

\author{Hua Li}
\affiliation{%
 \institution{Microsoft Bing Ads}
 \city{Redmond}
 \country{USA}}
\email{huli@microsoft.com}

\author{Xue Li}
\affiliation{%
 \institution{Microsoft Bing Ads}
 \city{Mountain View}
 \country{USA}}
\email{xeli@microsoft.com}

\author{Jian Jiao}
\affiliation{%
 \institution{Microsoft Bing Ads}
 \city{Redmond}
 \country{USA}}
\email{jiajia@microsoft.com}

\renewcommand{\shortauthors}{Sharma et. al.}
\newcommand{\amit}[1]{\textcolor{blue}{#1 --amit}}

\begin{abstract}
Given an input query, a recommendation model is trained using user feedback data (e.g., click data) to output a ranked list of items. In real-world systems, besides accuracy, an important consideration for a new model is novelty of its top-k recommendations w.r.t. an existing deployed model. However, novelty of top-k items is a difficult goal to optimize a model for, since it involves a non-differentiable sorting operation on the model's predictions. Moreover, novel items, by definition, do not have any user feedback data. Given the semantic capabilities of large language models, we address these problems using a reinforcement learning (RL) formulation where large language models provide feedback for the novel items. However, given millions of candidate items, the sample complexity of a standard RL algorithm can be prohibitively high. To reduce sample complexity, we reduce the top-k list reward to a set of item-wise rewards and reformulate the state space to consist of $\langle$query, item$\rangle$ tuples such that the action space is reduced to a binary decision; and show that this reformulation results in a significantly lower complexity when the number of items is large.  We evaluate the proposed algorithm on improving novelty for a query-ad recommendation task on a large-scale search engine. Compared to supervised finetuning on recent \textless query, ad\textgreater pairs, the proposed RL-based algorithm leads to significant novelty gains with minimal loss in recall.  We obtain similar results on the ORCAS query-webpage matching dataset and a product recommendation dataset based on Amazon reviews.
\end{abstract}

\begin{CCSXML}
<ccs2012>
   <concept>
       <concept_id>10002951.10003317.10003338.10010403</concept_id>
       <concept_desc>Information systems~Novelty in information retrieval</concept_desc>
       <concept_significance>500</concept_significance>
       </concept>
 </ccs2012>
 <ccs2012>
<concept>
<concept_id>10002951.10003260.10003272</concept_id>
<concept_desc>Information systems~Online advertising</concept_desc>
<concept_significance>500</concept_significance>
</concept>
</ccs2012>
\end{CCSXML}

\ccsdesc[500]{Information systems~Novelty in information retrieval}
\ccsdesc[500]{Information systems~Online advertising}
\keywords{Recommendation System, Novelty, Large Language Models, Reinforcement Learning}




\maketitle

\section{Introduction}
Given a user's profile or an input query, the recommendation problem is to fetch a ranked list of top-k items based on a task-specific goal. We consider the retrieval layer of a recommendation system, where the input is typically millions of candidate items and output is hundreds of ranked items (e.g., k=$200$). For the retrieval layer, semantic relevance is a common goal and models are trained to rank relevant items higher~\citep{guo2022semantic}. 
A common way to train such recommendation models is to use supervised learning on user feedback data such as clicks, using losses such as  contrastive learning that encourage representations of clicked query-item pairs to be closer to each other than negative (or random) query-item pairs~\citep{gao2021simcse}.

However, in addition to relevance, real-world recommendation systems typically  have additional goals to optimize for the predicted top-k items. For instance, an important goal is that a candidate model predicts top-k items that are \textit{novel} w.r.t. the existing deployed models in the system~\cite{kaminskas2016diversity}. Novelty is a desirable property since it can avoid showing repetitive or redundant items to a user, enhance global coverage over all available items, and help avoid any systematic bias towards previously clicked items in the system~\cite{herlocker2004evaluating}.  However, it is difficult to train a model to optimize novelty of top-k items since evaluating novelty   requires a non-differentiable sorting operation to obtain the top-k list.  Moreover, novel items, by definition, do not have any user feedback data corresponding to the particular query. As a result, getting relevance feedback on novel items  requires conducting an online experiment showing exploratory, novel items to the user, which can be a costly and infeasible procedure for many systems. 

\begin{figure*}[t]
\begin{center}
   \includegraphics[width = 1.0\linewidth]{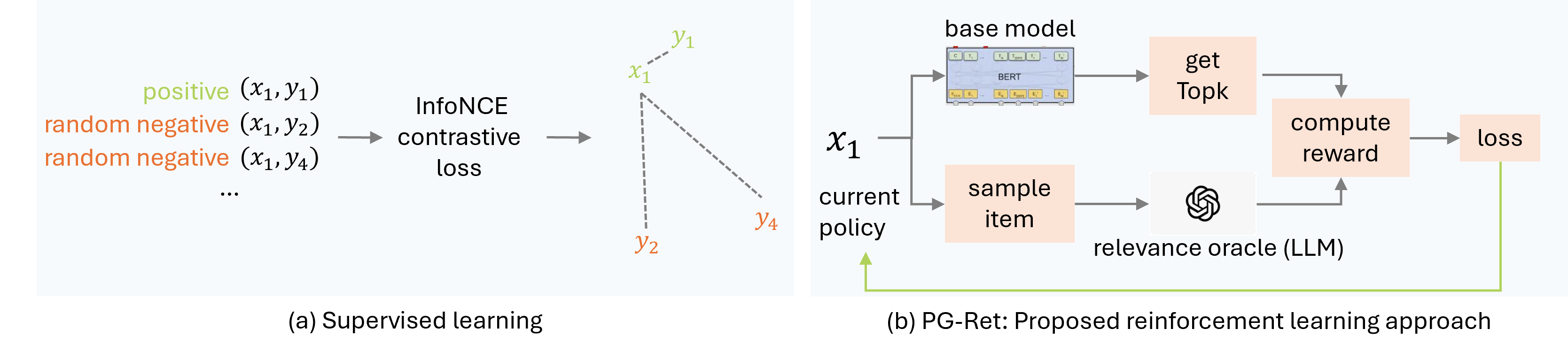}
\end{center}
   \caption{Our goal is to optimize retrieval models for novelty. \textit{Left:} Supervised learning with a contrastive loss is unable to optimize novelty directly since it involves a non-differentiable top-k sorting operation.  \textit{Right:} the proposed RL approach with explicit optimization for the novelty of top-$k$ retrieved items (w.r.t. to a base model) using LLM-based relevance feedback. }
   \label{fig.ps}
\end{figure*}

To obtain scalable feedback for novel items, we utilize recent results that show that large language models (LLMs) like GPT-3.5 or GPT-4 can match or surpass quality of crowd-sourced relevance feedback for recommendation tasks, such as matching user input to ad keywords~\cite{he2023annollm} or matching user history to suggested  movies and games~\cite{hou2023large}. Specifically, we  propose that LLMs can be used as \textit{reward models} to  provide relevance feedback for novel items, using an appropriate task-specific prompt. As a result, (noisy) relevance feedback for novel items can be obtained at scale and be used for both offline model training and evaluation, with human feedback restricted to only evaluating the final model in an A/B test. 

For handling non-differentiability of the novelty metric, we consider a reinforcement learning (RL) formulation, as in past work on optimizing metrics such as diversity~\cite{liu2021diversity} and popularity~\cite{shi2023relieving,stamenkovic2022choosing} of top-k predicted items. Specifically, the retrieval problem can be framed as a one-step RL problem where the input query is the state, the top-k predicted items constitute the action, and their relevance and novelty constitute  the environment reward.  However, 
the large action space in retrieval tasks---equal to the number of available items that can be predicted for an input query (typically millions)---presents a challenge due to the data requirements for RL algorithms.  While we use a policy gradient RL algorithm that  does not explicitly parameterize the action space and hence is better suited for large action spaces,  
the finite sample convergence results of policy gradient algorithms~\citep{mei2020global} show  a direct dependence of policy error on the square of the action space size. Consequently, the resultant algorithms are reported to not work well beyond tens of items~\cite{lu2022promptpg}.

To address the issue of optimizing over large action spaces, we make two contributions. First, rather than the standard practice of training RL algorithms from scratch for recommendation systems~\cite{shi2023relieving,stamenkovic2022choosing,keat2022multiobjective,xie2021hierarchical}, we argue that it is more practical to consider RL as a tool to \textit{finetune} an existing supervised learning model that has been trained on user feedback data. Such an approach has led to substantial gains in training image~\cite{pinto2023tuning} and language~\cite{ouyang2022training} models and we extend it for recommendation systems. We also provide a formal justification for this line of work: assuming that the existing model has a higher probability of  selecting the optimal action (item)  for a query than a uniformly random policy, we show that initializing from an existing model reduces the sample complexity requirements for a policy gradient algorithm. 

Second, we propose a reformulation of the problem that further reduces the sample complexity. Rather than considering a query as the state, we consider the (query, item)  pair as a state. As a result, the action space is converted to a  binary space: whether the item is preferred by the task-specific goal (e.g., relevance and novelty) or not.  Theoretically, our formulation reduces the dependence of convergence error from quadratic to linear in the action space, which is a substantial reduction whenever the action space is large. Crucially, even though the action space is binary given a query-item pair, policy  optimization can use rewards based on the top-$k$ retrieved list from the current policy, which was not possible in supervised learning. At the same time, the reduction leads to an empirical advantage: we can develop a scalable, batch-wise training  algorithm for optimizing the policy using loss functions that are well-suited for large-scale recommendation, such as those based on contrastive learning. We call the resultant algorithm, \textit{\pgr}, Policy Gradient for large-scale Retrieval. 

We evaluate  \pgr on three datasets: a proprietary dataset for query-ad recommendation task, a public dataset for query-webpage matching task, and a  product recommendation dataset based on Amazon reviews. In the first dataset, the task is produce top-k items that are novel compared to a given production model on a commercial search engine. The training data consists of recently collected \textless query, ad\textgreater pairs from the search engine's logs and has 33M actions. We find that \pgr leads to significant improvement (2X-5X) in novelty of top-k retrieved items compared to a supervised model trained on the same training data,  while incurring minimal drop in precision and recall. Online A/B test shows a statistically significant gain of 1\% in retrieval density, the average number of relevant ad keywords fetched per query. The second dataset is a public dataset, ORCAS, for matching queries to webpage titles. We obtain a similar result: \pgr leads to over 2X improvement in novelty for the top-20 items while incurring a small drop in recall and precision.  Finally, \pgr can also be applied to general recommendation scenarios beyond query-based tasks. We apply it to an Amazon user-product recommendation task and find significant gains in novelty compared to supervised finetuning. 

To summarize, we propose a technique based on the policy gradient algorithm that is simple to implement (see Algorithm 1)  and shows the benefit of finetuning existing models with RL for task-specific goals such as novelty. Our contributions include, 
\begin{itemize}
\item A simple, model-agnostic algorithm to finetune retrieval models that works for large action spaces (millions of actions) and arbitrary task-specific goals. 
\item Leveraging relevance feedback from LLMs; thus allowing for optimization of novel items directly since novel items, by definition, do not have user feedback data.
\item Theoretical and empirical justification of the method for increasing novelty in recommendation systems, including evaluation of the method for predicting ad keywords on a commercial search engine. 
\end{itemize}

\section{Related Work}

\subsection{Enhancing novelty of top-k results.}
Outputting novel items compared to existing algorithms is an important goal for recommendation systems~\cite{shani2011evaluating}. A common approach is to consider a  multi-objective optimization problem~\cite{rodriguez2012multiple} with relevance and novelty as the two objectives~\cite{hurley2011novelty,diez2019optimizing}. Recently, reinforcement learning  has been used for novelty optimization since RL can be applied to non-differentiable objectives like novelty~\cite{stamenkovic2022choosing,keat2022multiobjective,xie2021hierarchical}. However, past work uses approximations of novelty, such as an item being less popular~\cite{stamenkovic2022choosing,shi2023relieving,vargas2011rank} or belonging to less popular categories~\cite{xie2021hierarchical},  to enhance novelty,  since truly novel items---those that have not been shown by the current system---would not have any user feedback data for training. To ascertain relevance of truly novel items, heuristics like estimating the curiosity of a user (more curiosity admits more novel items as relevant) have been proposed~\cite{zhao2016much}, but no systematic way to estimate relevance exists.  Consequently, without relevance feedback, there exists a direct tradeoff between optimizing novelty and relevance of a top-k list.  In this work, we use the semantic capabilities of LLMs to propose a  general, scalable method to estimate  relevance of novel items.  As a result, we can directly optimize for the novelty objective without sacrificing relevance.

Note that encouraging novelty in top-k items is different from addressing the  cold-start problem~\cite{sanner2023large,wu2024could}. Novel items are defined wrt. a query whereas cold-start items are defined globally for a system. In many cases, a novel item wrt. a query may be a popular item globally. As a result, methods for the cold-start problem may not directly apply to novelty optimization.

\subsection{RL with large action spaces.}
Given a query, information retrieval can be divided into two stages~\cite{hron2021component}; 1) retrieval of relevant items from a large candidate pool of items; 2) ranking the retrieved items to select a smaller set of items that are shown to the user. RL algorithms find many applications in the ranking phase, including contextual bandits~\citep{li2010contextual}, markov decision processes~\citep{wei2017mdp}, and policy gradient  algorithms~\citep{pan2019policy,xu2020ltrpolicy}. Many of these algorithms introduce a task-specific loss to the ranking problem. However, applying these techniques to the retrieval phase is challenging because of the large number of candidate items (actions).  

Large action spaces is a general challenge for RL beyond retrieval models~\citep{dulac2015deep}, even when using policy gradient, a method that does not explicitly parameterize the action space and thus is better suited for higher number of actions.  For sequential recommendation problems, appropriately parameterizing the policy and using off-policy correction helps in scaling to large actions~\cite{chen2019top}. For one-step RL problems, recent theoretical work in contextual bandits~\citep{zhu2022contextual,lopez2021learning} tries to address the large actions problem. However, the focus of these approaches is on obtaining an optimal solution from scratch, which may be difficult in practice and misses the complementary benefits of supervised learning over user feedback. Instead, we finetune an existing supervised model, as done by \cite{pinto2023tuning} for computer vision models.  

\subsection{LLMs for information retrieval.} Recently, LLMs like GPT-3.5 have been applied to retrieval tasks in a zero-shot setting with encouraging results~\citep{dai2022promptagator,hou2023large,arora2023gar}. Instead of using compute-intensive LLMs directly for the retrieval task, here we aim to use LLMs as reward models to train an efficient, small retrieval model.

\section{Optimizing novelty W/ LLM FEEDBACK}
\subsection{Problem statement: Novelty optimization}
Commonly used retrieval models use a bi-encoder architecture~\citep{gao2021simcse}, where the same neural network model embeds a query and item into a common representation space. The top-k items are selected based on the nearest neighbors to a query, as measured by a suitable distance function over the embeddings (e.g., cosine similarity). The encoder $\phi$ is typically optimized using a variant of contrastive learning, encouraging that positive $<$query,item$>$ pairs in the training set should be closer in embedding space than non-positive pairs. Non-positive pairs may be random pairs~\cite{gao2021simcse} or mined from train data\cite{dahiya2021siamesexml}. Thus, given a set of queries $X$ and items $Z$,  a  train dataset $D \sim \mathcal{D}$ with positive query-item pairs, and a similarity function $\texttt{sim}$, the trained retriever $\hat{\phi}$ can be written as (using the InfoNCE contrastive loss function~\cite{oord2018representation}),
\begin{align} \label{eq:cont-learn}
\hat{\phi} &= \arg \min_\phi -\log \sum_{(x,z)\in \mathcal{D}} \frac{\exp(\texttt{sim}(\phi(x),\phi(z)))}{\sum_{z' \in neg(x)} \exp(\texttt{sim}(\phi(x),\phi(z')))} \nonumber \\
\mathbf{y}(x) &= \texttt{Topk}_{z\in Z} \texttt{sim}(\hat{\phi}(x),\hat{\phi}(z))
\end{align}
where    $\mathbf y = [y_1, y_2, \cdots, y_k]$ refers to the sequence of top-k items returned by a trained encoder $\hat{\phi}$ and $\texttt{Topk}$ is a non-differentiable ranking operation.
At inference time, given a query, the trained encoder $\hat{\phi}$ is used to compute the similarity with each item and the top-k closest items are returned as the prediction.

Given the trained encoder $\hat\phi$, our goal is to finetune the encoder such that the novelty of its top-k items compared to a base model is optimized, while ensuring that the accuracy of the encoder does not decrease substantially. Specifically, we assume that there is an existing base retrieval model $\psi$ deployed in the production recommendation system and we want to deploy another retrieval model that introduces novel items. Both the base model and the new, finetuned model would run in parallel and their results can be merged for the downstream ranking layer. Formally, novelty@k is defined as,
\begin{definition}\label{def:novelty}
\textbf{Novelty:} Given a query $x$, Novelty@k $(x, \phi, \psi, L)$ is defined as the number of items in top-k predictions of a candidate model $\phi$ that do not exist in top-L predictions of the base model $\psi$.  
\end{definition}
Typically, $L$ is set as the number of items that the retrieval layer sends downstream to the ranking layer (e.g., $L=200$ in our experiments). If a candidate model outputs an item that is beyond the top-L of the base model, then it will result in a new item added to the recommendation system's ranking layer. Note that $k$ need not be the same as $L$. For efficiency reasons, we may be interested in novelty at lower $k$ (e.g., $k=50$), under a setting where the candidate model only adds $k$ additional items to be sent to the downstream ranking layer.  
While our definition assumes a single base model, it can be easily extended to multiple deployed models by pooling their top-$L$ predictions for novelty comparison.

Note that novelty of top-k items of a candidate algorithm is different from its diversity~\cite{shani2011evaluating}, which measures the distance of items \textit{within} the top-k list produced by the candidate algorithm.

\subsection{RL formulation using policy gradient}

\texttt{Novelty@k}, as defined above, cannot be optimized directly since it includes a non-differentiable component w.r.t. the encoder's parameters, the top-k sorting operation over items. As in past work on recommendation systems~\cite{stamenkovic2022choosing,keat2022multiobjective,xie2021hierarchical}, a natural solution for non-differentiable rewards is to use reinforcement learning.  Formally, our problem can be formulated as a one-step RL setting. The query is considered as the \textit{state} and the set of top-k items as the \textit{action}. A policy $\pi: X \rightarrow \{Z_k : Z_k \subseteq Z\}$ is a function that outputs a set of top-k items given a query. For each action selected by the policy, the environment provides reward feedback on the $<$ state,action $>$ pair. For instance, in the query-ad recommendation task, user's input query is the state, the top-k predicted ads are the action by the policy, and the environment provides reward based on the query and its top-k predicted ads, e.g., based on the novelty and relevance of the predicted ads. Given a reward function $\mathcal{R}$, the task to learn a policy $\pi_\theta$ (parameterized by $\theta$), that optimizes,
\begin{equation} \label{eq:mot-goal}
    \max_\theta \mathbf{E}_{x\sim \mathcal{D}} \mathbf{E}_{\mathbf{y} \sim \pi_\theta(x)} \mathcal{R}(x,\mathbf{y})
\end{equation}

In our work, we consider a pre-trained encoder $f_\theta: X \cup Z \rightarrow \mathbf{R}^d$ to correspond to the initial policy $\pi_\theta^0$. For simplicity, given an encoder/policy at any stage of training, the top-k items  $\mathbf{y}$ are assumed to be independently sampled from the discrete action probability distribution, 
$\pi_\theta: X \rightarrow Z; \text{\ \ }y_j \sim \pi_\theta(z|x)  \forall j \in \{1,2,3..k\}$; 
where $\pi_\theta(z| x) = \operatorname{softmax}_Z \texttt{sim}(f(x), f(z))$. We use $\pi_\theta (\mathbf{y}|x)$ to denote the top-k items generated using the $\pi_\theta$ policy.

As in past work on using RL in recommendation systems~\cite{pan2019policy,montazeralghaem2020reinforcement,chen2019top}, we use a policy gradient algorithm to optimize the novelty reward. Policy gradient algorithms are well-suited for recommender systems since they  do not explicitly parameterize the action space (thus allowing a large number of actions) and can directly update the parameters of the encoder $f_\theta$. Specifically, we use the REINFORCE~\citep{williams1992simple} algorithm that depends on a Monte Carlo approximation of the reward expectation from Eqn.~\ref{eq:mot-goal}.

 \begin{equation}
       \mathbf{E}_{x\sim \mathcal{D}} \mathbf{E}_{\mathbf{y} \sim \pi_\theta(x)} \mathcal{R}(x,\mathbf{y}) \approx \frac{1}{B} \sum_{i=1}^{B} \mathcal{R}(x^{(i)}, \mathbf{y}^{(i)})
 \end{equation}
 where $y^{(i)}_j \sim \pi_\theta(x^{(i)}) \forall j \in \{1,2,3, \cdots, k\}$ and B is the batch size.  The loss gradient is given by, 
\begin{align} \label{eq:pgr-grad}
\nabla & \mathbf{E}_{x\sim \mathcal{D}} \mathbf{E}_{\mathbf{y} \sim \pi_\theta(x)} \mathcal{R}(x,\mathbf{y}) = \mathbf{E}_{x\sim \mathcal{D}} \mathbf{E}_{\mathbf{y} \sim \pi_\theta(x)} \mathcal{R}(x,\mathbf{y}) \nabla \log{\pi_\theta(\mathbf{y}|x)} \nonumber \\
& \approx \frac{1}{B} \sum_{i=1}^{B} \mathcal{R}(x^{(i)}, \mathbf{y}^{(i)}) \nabla \log{\pi_\theta(\mathbf{y^{(i)}}|x^{(i)})} 
\end{align} 
 where  $y^{(i)}_j \sim \pi_\theta(x^{(i)}) \forall j \in \{1,2,3, \cdots, k\}$. Since the reward is one-step, the above optimization can be interpreted to have a simple goal: increase the probability of items that occur in a k-sequence with high reward, and decrease the probability of items that occur in a k-sequence that obtains low reward. 
Note that we use the REINFORCE algorithm for simplicity,  but any other policy gradient algorithm can be used.
\subsection{LLMs make novelty optimization practical}
While the formulation is reasonable, there is a key limitation: novel items, by definition, do not have any user feedback data since they were never shown to users by the base production model for that query. Hence, if we use only the log-based training data for Eqn.~\ref{eq:pgr-grad}, relevance feedback for the novel items would be missing and thus no novel item's reward can be computed. 

In this paper, we show that LLMs help avoid this limitation by providing relevance feedback for novel query-item pairs. LLMs such as GPT-3.5 and GPT-4 have been shown to provide a substantial improvement in semantic understanding compared to existing encoder-based models.  For instance, recent work shows that GPT-3.5 can match or surpass crowd-sourced human labellers in their accuracy on labelling~\cite{he2023annollm} or ranking~\cite{hou2023large} retrieval outputs for relevance.  In addition, even though the relevance criterion for a correct recommendation may differ for different tasks, we need not train separate relevance models. For example,  while the criteria for matching books to other books may be different from matching  queries to ads, which in turn may be slightly different for matching queries to webpages;  a single LLM like GPT-3.5 can be used for providing relevance feedback for all these domains by simply changing the prompt. Hence, the accuracy and universality of LLMs make it possible to obtain relevance feedback for arbitrary query-item pairs and optimize novelty directly for retrieval tasks.

\subsection{The challenge with large action spaces}
While LLMs provide a solution to the problem of relevance feedback for novel query-item pairs, another key challenge is the large number of potential query-item pairs to label since available items are typically in the millions. Below we show that the number of relevance reward samples needed for a policy gradient algorithm 
to obtain a desired accuracy (\textit{sample complexity})  increases proportional to the square of the number of actions. The result is based on applying finite sample convergence bounds~\cite{mei2020global} for policy gradient algorithms to the retrieval setup.

Let $\pi^*$ denote the optimal policy that maximizes Eqn~\ref{eq:mot-goal} and $t$ refer to the steps of the optimization. We consider a one-step Markov Decision Process where each query corresponds to a state and the actions are the top-k predicted items. Hence, for each state, we only need to  optimize the current reward. We assume a uniform probability for sampling each state (query).  

\begin{assumption}
 [from \cite{mei2020global}](Sufficient exploration). The initial state $\mu$
distribution satisfies $\min_s \mu(s) > 0$. 
\end{assumption}
\begin{assumption}
    For each state, the initial policy's probability for selecting the optimal action is better than random (uniform probability) within some multiplicative constant $\rho \geq 1$:  $\pi_{\theta_0}(\mathbf{y}^*(x)|x) > \frac{1}{\rho A} \text{ } \forall x \in {D}$ where $\mathbf{y}^*(x) := \arg \max_{\mathbf{y}} \pi^*(\mathbf{y}|x)$. 
\end{assumption}
\begin{restatable}{proposition}{firstprop}\label{thm-orig} 
    Let Assumptions 1 and 2 hold and let $\{ \theta_t\}_{t\geq 1}$ be
generated using the standard policy gradient update: $\theta_{t+1} \leftarrow \theta_t + \eta \frac{\partial V^{\pi_{\theta_t}}(\mu)}{\partial \theta_t}$ where $V^{\pi_{\theta_t}}(\mu) = \mathbf{E}_{x\sim \mathcal{D}}\mathbf{E}_{\mathbf{y}\sim \pi_{\theta_t}(.|x)}\mathcal{R}(x, \mathbf{y})$.   Then, for all $t \geq 1$, with $\eta=1/8$,
\begin{equation}
\mathbf{E}_{x \sim \mathcal{D}}[(\mathbf{\pi}^*(x) - \mathbf{\pi}_{\theta_t}(x) )^T \mathbf{r}(x)] \leq \frac{16S A^2 \rho^2}{t} \norm{\frac{1}{\mu}}_{\infty}
\end{equation}
where  $S=|X|$ is the number of states and $A$ is the number of actions. 
\end{restatable}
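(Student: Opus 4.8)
The plan is to read the claim as a direct instantiation of the $O(1/t)$ global convergence rate for softmax policy gradient due to~\cite{mei2020global}, specialized to the one-step (bandit-like) MDP of our retrieval formulation, and then to replace the problem-dependent constant in their bound by the explicit quantity supplied by Assumption 2. First I would note that, by the very definition of the value in the statement, $V^{\pi_{\theta_t}}(\mu)=\mathbf{E}_{x\sim\mathcal{D}}\mathbf{E}_{\mathbf{y}\sim\pi_{\theta_t}}\mathcal{R}(x,\mathbf{y})$, so the left-hand side $\mathbf{E}_{x\sim\mathcal{D}}[(\pi^*(x)-\pi_{\theta_t}(x))^T\mathbf{r}(x)]$ is \emph{exactly} the value suboptimality $V^{\pi^*}(\mu)-V^{\pi_{\theta_t}}(\mu)$ for the softmax policy $\pi_\theta(z|x)=\operatorname{softmax}_Z\texttt{sim}(f(x),f(z))$. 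This lets me invoke the rate of~\cite{mei2020global} essentially verbatim, whose bound has the shape $\frac{\text{const}\cdot S}{c^2\,t}\,\norm{\frac{d_\mu^{\pi^*}}{\mu}}_\infty^2\,\norm{\frac{1}{\mu}}_\infty$, where $c:=\inf_{t\ge1}\min_x\pi_{\theta_t}(\mathbf{y}^*(x)|x)$ is the worst-case mass the iterates place on the optimal action.

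Next I would simplify the distribution-mismatch factors, using that the MDP is one-step. Because the episode terminates before any transition occurs, the discounted state-visitation distribution induced by any policy started from $\mu$ coincides with $\mu$ itself; hence $d_\mu^{\pi^*}=\mu$ and $\norm{\frac{d_\mu^{\pi^*}}{\mu}}_\infty=1$, leaving only the single $\norm{\frac{1}{\mu}}_\infty$ term that appears in the statement. Assumption 1 ($\min_x\mu(x)>0$) guarantees this factor is finite.

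The crux is bounding $c$ from below. Here I would use the monotonicity property of~\cite{mei2020global}: under softmax policy gradient with a constant step size in the range covered by their smoothness-based analysis (here $\eta=1/8$), the probability assigned to the optimal action is non-decreasing along the trajectory, so $\inf_{t\ge1}\pi_{\theta_t}(\mathbf{y}^*(x)|x)\ge\pi_{\theta_0}(\mathbf{y}^*(x)|x)$ for every $x$. Assumption 2 then yields $c\ge\min_x\pi_{\theta_0}(\mathbf{y}^*(x)|x)>\frac{1}{\rho A}$, whence $\frac{1}{c^2}<\rho^2A^2$. Substituting this into the bound and folding the numerical factor into the stated $16$ gives $\frac{16\,S\,A^2\rho^2}{t}\norm{\frac{1}{\mu}}_\infty$, as claimed; the $A^2$ dependence is thus seen to enter solely through $1/c^2$, the squared reciprocal of the optimal-action probability, which is precisely the quantity the reformulation in the rest of the paper attacks.

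The step I expect to be the main obstacle is this uniform lower bound on $c$: the rate of~\cite{mei2020global} is only useful once $c$ is controlled over \emph{all} iterates, not merely at initialization, so I must verify that the non-decrease of $\pi_{\theta_t}(\mathbf{y}^*(x)|x)$ genuinely transfers to our setting. Concretely this requires checking that the softmax-over-similarities parameterization is a special case of the tabular softmax for which their monotonicity lemma holds, and that the one-step reward $\mathcal{R}(x,\mathbf{y})$ meets their boundedness hypotheses; the remaining manipulation is routine bookkeeping of constants.
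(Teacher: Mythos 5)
Your proposal follows essentially the same route as the paper's proof: invoke the $O(1/t)$ rate of Mei et al., specialize to the one-step case ($\gamma=0$, so $d_\mu^{\pi^*}=\mu$ and the mismatch ratio is $1$), and lower-bound $c$ by $\frac{1}{\rho A}$ via Assumption 2 to obtain the $A^2\rho^2$ factor. The only difference is cosmetic: where you appeal to a monotonicity property of the optimal-action probability to control $c$ over all iterates, the paper simply assumes the gradient updates do not decrease that probability -- the same step, stated as an assumption rather than a cited lemma.
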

Proof is in Appendix. The proof uses Thm. 4 from \cite{mei2020global} for a Markov Decision Process and  adapts it to the single-step problem and additionally uses Assumption 2 to express the bound in terms of $\rho$ and $A$. Since the policy outputs a list of top-k items, the total number of possible actions is $A = C(|Z|, k)$, indicating a high sample complexity. In the next section, we describe how we reduce the effective sample complexity and derive a practical algorithm.

\section{Large Action Policy Gradient}
Proposition~\ref{thm-orig} shows that a naive application of the policy gradient algorithm will be slow to converge to the optimal reward-maximizing solution due to a combinatorially large action space. First, we show that the combinatorial action space $A=C(|Z|,k)$ can be decomposed into item-wise action space, $A=|Z|$ whenever the reward over top-k items can be decomposed as an additive sum over item-wise rewards. We show that most common novelty reward functions satisfy this property. Second, we provide a reformulation of the RL problem that reduces the action space to a binary decision and further increases the rate of convergence. 

\subsection{Reduction from Top-k to item-wise rewards}
As stated in Section 3,  the reward function is a combination of novelty and relevance rewards. We assume that both novelty and relevance rewards compose additively. That is, given a query $x$, 
$Novelty@k (x, \phi, \psi, L) = \sum_{j=1}^k Novelty(\langle x, z_j\rangle, \phi, \psi, L)$, and we have
$Relevance@k(x, \phi) = \sum_{j=1}^k Relevance(\langle x,z_j \rangle,\phi)$; 
where $z_j$ are the individual item predictions and $Relevance@k$ is typically recall@k or precision@k. As a result, we can reformulate the action space to consist of individual items, $A=|Z|$. In expectation, maximizing the novelty and relevance reward function for $j\in[1,k]$ separately would imply maximizing the top-k reward. 

While the action space is reduced in size, a key benefit of our formulation is that the  reward can still be a function of the top-k retrieved items from some model. 
This is because the environment can decide the reward based on whether the item is a part of the top-k items for the query. Recall that, given a state (query) $x$ and an action (item) $z \in Z$, the novelty reward is dependent on whether the $z$ is part of the top-k items returned by the base model $\psi$. As an example, we provide a simple reward function combining item-wise relevance and novelty w.r.t.  $\psi$. Given any query $x$ and item $z$, and a relevance oracle, $\texttt{Rel}$ (i.e., an LLM), the reward is given by,
\begin{equation}\label{eq:reward-eg}
    \mathcal{R}_b(x, z) = \begin{cases}
     1 & \text{ if } \operatorname{Rel}(x,z)=1 \text{ and \ } z \not \in topk(x, \psi) \\
    -1 & \text{ if } \operatorname{Rel}(x,z)=0 \text{ and \ } z \in topk(x, \psi)\\
    -0.5 & \text{ if } \operatorname{Rel}(x,z)=1 \text{ and \ } z \in topk(x, \psi) \\
    -1 & \text{ if } \operatorname{Rel}(x,z)=0 \text{ and \ } z \not \in topk(x, \psi) 
    \end{cases} 
\end{equation}
We can see that the reward function penalizes irrelevant items and relevant items that are not novel w.r.t. the base model, while encouraging relevant items that are novel. However, in practice, the LLM-based relevance function may be noisy. In particular, as higher relevance items are likely to be sampled more, false negatives can be a problem. 
Therefore, for noisy oracles, we propose a simpler reward function that is only activated for relevant items but not for irrelevant items predicted by the oracle. 
\begin{equation}\label{eq:rewardfn}
    \mathcal{R}_b(x, z) = \begin{cases}
     1 & \text{ if } \operatorname{Rel}(x,z)=1 \text{ and \ } z \not \in topk(x, \psi) \\
    -0.5 & \text{ if } \operatorname{Rel}(x,z)=1 \text{ and \ } z \in topk(x, \psi) \\
    0 & \text{ otherwise}
    \end{cases} 
\end{equation}
\subsection{Reduction to binary action space}
To further improve the convergence rate, 
 we consider a different RL formulation where the state is $\langle$query, item$\rangle$ pair and the policy outputs the probability of selecting the item. The item-wise reward function changes slightly to accommodate the two actions of selecting the item (1) or not (0). Assuming the same reward logic from Eqn.~\ref{eq:reward-eg}, the reward function becomes, $\mathcal{R}((x, z), a) =  a\mathcal{R}_b(x, z) + (1-a)(-\mathcal{R}_b(x, z))$. Intuitively, if the policy selects the item for a query, then its reward for the action is proportional to $\mathcal{R}_b$. Otherwise, if it does not select the item, then its reward is proportional to negative of $\mathcal{R}_b$.  
 The corresponding gradient is, 
 {\small
\begin{align} \label{eq:new-model}
& \nabla  \mathbf{E}_{x,z\sim \mathcal{D}}  \mathbf{E}_{\mathbf{a} \sim \pi'_\theta(x,z)} \mathcal{R}((x,z), a)  \\
& \approx \frac{1}{B} \sum_{i=1}^{B} [a^{(i)}\mathcal{R}_b(x^{(i)},z^{(i)}) - (1-a^{(i)})\mathcal{R}_b(x^{(i)},z^{(i)})]\nabla \log{\pi'_\theta(a^{(i)}|x^{(i)}, z^{(i)})} \nonumber
\end{align}} 
where $\pi'_\theta(a=1| x,z) =  \pi_\theta(x,z) = \operatorname{softmax}_Z \texttt{sim}(f_\theta(x), f_\theta(z))$ and $\pi'_\theta(a=0| x,z) = 1-\pi_\theta(x,z) = 1 - \operatorname{softmax}_Z \texttt{sim}(f_\theta(x), f_\theta(z))$. 

 With this formulation, the number of states increases to $SA$ but the number of actions reduces to 2. As we show below, the convergence rate is significantly faster since the error now grows linearly with A rather than quadratic.

\begin{restatable}{proposition}{secondprop}\label{prop:second}
With the new formulation, under the assumptions of Proposition 1, for all $t \geq 1$,
\begin{equation}
\mathbf{E}_{x \sim \mathcal{D}}(\mathbf{\pi}^*(x) - \mathbf{\pi}_{\theta_t}(x) )^T \mathbf{r}(x)  \leq \frac{64SA  \rho^2}{t} \norm{\frac{1}{\mu}}_{\infty}
\end{equation}
\end{restatable}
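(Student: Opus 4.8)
The plan is to mirror the proof of Proposition~\ref{thm-orig}: apply Theorem 4 of \cite{mei2020global} to the reformulated one-step MDP, then re-derive the two MDP-dependent quantities—the number of states and the lower bound on the optimal-action probability—under the new $\langle$query, item$\rangle$ state space. First I would instantiate the reformulation as a single-step MDP whose state space is $X\times Z$ (so $|\mathcal{S}'| = SA$), whose action space is the binary set $\{0,1\}$, and whose reward is $\mathcal{R}((x,z),a)=a\mathcal{R}_b(x,z)-(1-a)\mathcal{R}_b(x,z)$ as in Eqn.~\ref{eq:new-model}, with $\mu'$ the (uniform) sampling distribution over $(x,z)$ pairs. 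Because the horizon is one, the discount factor is effectively $\gamma=0$, so the discounted state-visitation distribution coincides with $\mu'$ and the distribution-mismatch coefficient collapses to $1$; this is the same simplification invoked for Proposition~\ref{thm-orig}. Theorem 4 of \cite{mei2020global} then yields, for $\eta=1/8$ and all $t\ge 1$, a bound of the form $\frac{16\,|\mathcal{S}'|}{c'^2\,t}\norm{\frac{1}{\mu}}_{\infty}$, where $c' := \inf_{t\ge 1}\min_{(x,z)} \pi'_{\theta_t}(a^*(x,z)\mid x,z)$ is the uniform lower bound on the optimal-action probability along the trajectory.

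The two substitutions that drive the result are $|\mathcal{S}'| = SA$ and the lower bound on $c'$. For the latter I would first invoke the monotonicity of the optimal-action probability along the softmax policy-gradient trajectory (the relevant lemma in \cite{mei2020global}), which lets me replace the infimum over $t$ by its value at initialization, $c'\ge \min_{(x,z)}\pi'_{\theta_0}(a^*(x,z)\mid x,z)$. The crucial observation is that Assumption 2 is a statement about being better than uniform-random within a multiplicative factor $\rho$, and uniform-random now places mass $1/2$ on each of the two actions rather than $1/A$; hence the adapted assumption reads $\pi'_{\theta_0}(a^*(x,z)\mid x,z) > \frac{1}{2\rho}$, giving $c'\ge \frac{1}{2\rho}$ and $1/c'^2 \le 4\rho^2$. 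Substituting $|\mathcal{S}'|=SA$ and $1/c'^2\le 4\rho^2$ into the Theorem-4 bound produces $\frac{16\cdot SA\cdot 4\rho^2}{t}\norm{\frac{1}{\mu}}_{\infty} = \frac{64\,SA\,\rho^2}{t}\norm{\frac{1}{\mu}}_{\infty}$, exactly the claimed inequality. The net effect relative to Proposition~\ref{thm-orig} is that one factor of $A$ migrates from $1/c^2$ into the state count while the other is eliminated, converting the quadratic dependence into a linear one.

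Finally I would connect the reformulated value gap $V^*(\mu')-V^{\pi'_{\theta_t}}(\mu')$, which is what Theorem 4 controls, back to the per-query suboptimality $\mathbf{E}_{x\sim\mathcal{D}}[(\mathbf{\pi}^*(x)-\mathbf{\pi}_{\theta_t}(x))^T\mathbf{r}(x)]$ appearing on the left-hand side. Here I would use the additive item-wise decomposition of the reward from Section 4.1: since maximizing the item-wise binary rewards in expectation is equivalent to maximizing the original top-$k$ reward, the reformulated gap dominates (and, up to the identification of $\mu$ with $\mu'$, coincides with) the per-query gap, so the bound transfers to the stated left-hand side.

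I expect the main obstacle to be the faithful transfer of the softmax policy-gradient analysis of \cite{mei2020global} to this reformulation, on two counts. First, the policy is parameterized as $\pi'_\theta(a=1\mid x,z)=\operatorname{softmax}_Z \texttt{sim}(f_\theta(x),f_\theta(z))$—the full $|Z|$-way softmax read off coordinate-wise—rather than as a genuine two-action tabular softmax, so I must argue that the monotonicity and non-uniform gradient-domination ingredients still hold, e.g.\ by passing to the tabular-equivalent binary policy or verifying the domination inequality directly for this parameterization. Second, I must ensure the constant in Assumption 2 is interpreted against the correct $A=2$ uniform baseline, since this reinterpretation—not any new analysis—is precisely what yields the improved linear rate; mishandling this factor would silently reintroduce the quadratic $A^2$ dependence.
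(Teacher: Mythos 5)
Your proposal is correct and follows essentially the same route as the paper: the paper's entire proof is to take the Proposition~1 bound $\frac{16SA^2\rho^2}{t}\norm{\frac{1}{\mu}}_{\infty}$ and substitute the new state count $SA$ and action count $2$, which is exactly your computation $16\cdot SA\cdot 4\rho^2 = 64SA\rho^2$. You additionally make explicit several steps the paper leaves implicit (the reinterpretation of Assumption~2 against the two-action uniform baseline, and the transfer of the value gap from the $(x,z)$-state formulation back to the per-query left-hand side), which is careful but not a different argument.
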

Note that in practice, $\rho$ may be higher for the binary action formulation since there are only 2 actions. Assuming a ``good enough" supervised policy, conservative value for $\rho$ may be $\sim50$, implying that probability of the optimal action under supervised policy is always $\geq 1/(2\times 50)=0.01$. 
Even under this conservative estimate, as long as the number of actions is of the order of millions, $\rho^2 << A$ and hence the convergence rate in Proposition 2 would be significantly faster. 
In other words, as long as $\rho^2 << A$, the binary-action 
policy gradient algorithm will converge at a faster rate. 


For maximizing a reward based on novelty and relevance such as Eqn.~\ref{eq:reward-eg}, the $\rho$ parameter in Proposition~\ref{prop:second} also implies that using the base policy as the initial policy may not be the best choice. The base policy is expected to be significantly better than a random policy at relevance, so it will assign higher than random probability to relevant items---both novel and not novel items. However, within relevant items, by definition, its probability for novel items will be lower than that of the not novel items. To increase the probability of the optimal action under the initial policy even further (and decrease $\rho$), we can use additional training data to finetune the base policy using supervised learning (e.g., InfoNCE loss from Eq.~\ref{eq:cont-learn}). We call this model the \textit{Supervised Model}. As long as the relevance accuracy does not decrease, randomness in the training procedure should shift the order of top-k items under the supervised model. As a result, the supervised model may retain the property of being better than  a random policy at relevance (just like the base policy) but may also yield higher probability to some novel items under the base policy. Therefore, whenever additional training data is available, such a supervised model is expected to have a lower $\rho$ and we recommend to use it as the initial policy.

\subsection{Proposed algorithm: \pgr}
The above discussion indicates the following conditions for a fast convergence rate with policy gradient, \textbf{1)} the action space should be small;  and \textbf{2)} the initial policy should be as close to the optimal as possible. We addressed the first condition through a binary-action formulation of the problem. For the second condition, we proposed initializing the policy optimization with a policy trained using supervised learning (i.e., using Eq.~\ref{eq:cont-learn}).  

Algorithm~\ref{algo:la-pg} shows the resultant training algorithm.  In each batch, $B$ query-item pairs (states) are sampled and the reward is computed for each state. Using Eqn.~\ref{eq:new-model}, we compute the gradient and update the policy parameters $\theta$ at the end of the batch. 
For each application, one needs to specify the sampling procedure for query-item pairs, the reward function, and the estimation procedure for $\pi'(a|x,z)$. 

\noindent \textbf{Sampling query-item states.} Since items are also a part of the state in our proposed formulation, a key question is how to sample the states $\langle x,z \rangle$. To map to the original policy gradient formulation, we can sample queries $x$ randomly from the train dataset for each batch. For each query, we compute the similarity scores with all items using the encoder $f_\theta$ corresponding to the current policy   and then sample  items proportional to their score.  Note that we are not restricted to only sampling items proportional to their similarity score (since items are  not actions now). Therefore, we also add exploration  by sampling  items from another retrieval model (trained independently). For example, for novelty, a natural choice is to sample from predictions of the base model, restricting the sampling only to items that are ranked beyond top-L. Such items will be novel by definition and thus we only need to check for relevance. More generally, we may use any pretrained encoder suitable for the task. Finally, there is a risk that the policy overfits to the reward from the relevance oracle and ``forgets'' the true user feedback data on which the supervised policy was trained~\cite{ouyang2022training}. Thus, we should also add the user feedback data $(x,z) \in {D}$ during training. To this end, in practice, all three sources are combined stochastically: we sample a query $x$ randomly from the dataset and then sample either $z \sim \pi_{{\theta}_t}(x)$ with probability $\alpha$; $z \sim \pi_{{\theta}explore_t}(x)$ with probability $\beta$; or $z \sim P_\mathcal{D}(x)$ with probability $1-\alpha-\beta$. 

\noindent \textbf{Estimating $\pi'(a|x,z)$.} While $\pi'(a|x,z)$ is defined as a softmax operation, computing the softmax over all items is a compute-intensive procedure. Moreover, training procedures for recommendation systems have benefitted from using contrastive losses. Therefore, we we implement an approximation of the softmax using contrastive losses. A straightforward approximation is to only use the items in the current batch as negatives (random in-batch negatives). However, since we initialize with a well-trained supervised policy, we may find that most of the contrastive losses are zero since the chosen item is already closer to the query than random negatives.  To speed up training, we use negatives that are aimed at optimizing novelty. Specifically, for each query, we use the top-M items returned by the base model as negatives. $M$ is a hyperparameter; too high $M$ may destroy all relevance. Therefore, we define two kinds of losses:\textbf{ 1)} \textit{Aggressive-Novelty:} InfoNCE loss from Eqn.~\ref{eq:cont-learn} with top-M negatives (typically M is small, e.g., $M=5$); \textbf{2)} \textit{Conservative-Novelty:} Triplet loss~\cite{le2020contrastive}, bounded $\in [0,1]$ and  margin=0  (effectively $M=1$). 

\noindent \textbf{Efficiency considerations. }Note that to compute top-k items or sample items for a given query, the entire set of actions have to be encoded by the current encoder $f_\theta$. For computationally efficiency, we fix the item encoder to be the initial encoder and only update the query encoder. This avoids having to recompute the item embeddings for each query in the batch. That is, only the query encoder is updated during policy optimization.

\begin{algorithm}
\caption{Large action PG}
   \label{algo:la-pg}
\begin{algorithmic}[1]
   \STATE {\bfseries Input:} Initial Policy $\pi_{\theta}$, Base Model  $\psi$, Training dataset ${D}$, Relevance Oracle $\texttt{rel}$,  Number of epochs $N$, Batch size $B$, Learning rate $\eta$, $\alpha, \beta \in [0,1]$
    \STATE {\bfseries Output:} Trained  policy $\hat{\pi}_\theta$.
    \FOR{epoch=1,2,3..N}
        \FOR{${D}_{batch} \sim  {D}$}
            \STATE $L = 0$ 
            \STATE $i = 0$ 
            \STATE $(x, z) \sim D$ 
            \WHILE{$i < B$}
                \STATE $a \sim \pi'_\theta(x, z)$ // Sample Action 
                \STATE $r_b = \mathcal{R}_b(x, z, \texttt{topk}(x, \psi), \texttt{rel}(x,z))$ // Reward depends on top-k items and relevance oracle
                \STATE $r = r_ba + (-r_b)(1-a)$  
                \STATE $L = L - r \log \pi'_{\theta}(a|x,z)$ 
                \STATE $i = i + 1$ 
                \STATE $x \sim D$ 
                \STATE $z \sim \pi_\theta(x) (\text{wProb } \alpha), \sim \pi_{\theta explore} (x) (\text{wProb } \beta), \text{ or } \sim  D (\text{wProb }  1-\alpha-\beta)$
            \ENDWHILE 
            \STATE $\theta = \theta - \eta \nabla L$ // Can use any gradient optimizer
        \ENDFOR
    \ENDFOR
\end{algorithmic}
\end{algorithm}

\section{Evaluation}
We evaluate \pgr on  increasing novelty of the top-k items of a retrieval model compared to an existing model's output. 

\subsection{Setup: Datasets, Metrics, and Baselines}
\noindent \textbf{Setup.}
We consider the production setup for a recommendation system, wherein there is an existing retrieval algorithm in production. We call this model the \textit{base} model. Typically, this model is trained on millions of query-item clicked pairs collected from log data. We assume access to a small amount of new training data (e.g., query-item pairs collected from log data after the base model has been deployed) and evaluate different ways to produce a new retrieval model with high novelty compared to the base model.  We consider an independent test set for evaluation. As in prior work~\cite{dahiya2021siamesexml}, the candidate pool of items remains the same across train and test datasets, typically of the order of millions of items. \\

\begin{table}[t]
    \centering
    \begin{tabular}{lccc}
    \toprule
     \textbf{Dataset}    & \textbf{Train inputs} & \textbf{Test inputs} & \textbf{\# Actions} \\
     \midrule
     Query-Keyword & 1.23M & 6.2K & 33.2M \\
     ORCAS & 1M  & 8.5K &  1.07M \\
     \bottomrule
    \end{tabular}
    \caption{Statistics for  the Query-Keyword datasets from a commercial search engine and the public ORCAS dataset.} 
    \label{tab:datasets}
\end{table}

\noindent \textbf{Datasets. } On the query-document matching task, we use a dataset from a commercial search engine and a public dataset  (see Table~\ref{tab:datasets}). We also use a Amazon dataset for product recommendation. 

\begin{itemize}[noitemsep,topsep=0pt,parsep=0pt,partopsep=0pt,leftmargin=*]
\item \textbf{Query-keyword recommendation}. This dataset contains queries and ad keywords from a commercial search engine's logs. Given a query, the goal is to predict ad keywords that share the same intent or a more general intent than the query (i.e., \textit{`phrase match'}). The base model is a 4-layer bi-encoder model that has been finetuned with contrastive loss on more than 20M query-keyword pairs. The training data consists of 1M new query-item pairs, along with a candidate pool of over 33 million items. 

\item \textbf{ORCAS}~\cite{craswell2020orcas}. This public dataset contains clicked webpages from Bing for queries from the TREC deep learning challenge. For our evaluation, we consider the search query as the user input and the webpage title as the item. We filter the dataset to remove click data where either the query or the webpage title are empty. The dataset contains 17.5 million query-webpage pairs. To simulate a production setup, we utilize the majority (16.5M) of the dataset for training a supervised model, that acts as the base model. The base model is initialized with SimCSE~\cite{gao2021simcse} and trained for 5 epochs. The remaining 1M are used as \textit{new} training data for optimizing novelty. We also reserve a separate, randomly sampled test set consisting of 8.5K query-keyword pairs.

\item \textbf{AmazonReviews}~\cite{li2023text}. This dataset contains users' product review histories on Amazon. Based on a user's previous reviews, the goal is to predict the next product that they would review. \citet{li2023text} convert it to text-based problem by represent items as a text sequence using their metadata (e.g., “Title: Philips motor Category: Home Appliances Color: Black”). Users are represented as a text concatenation of each item in their profile. For our experiments, we consider the \textit{Industrial \& Scientific} domain consisting of over 11K user histories and 5K products. \citet{li2023text} sort each  user's history  by time and break it down into a train set, validation set (second-last product), and a test set (most recent product in the history). We use the train set for training the base model, which is initialized with the pre-trained \textit{RecFormer}  model from \cite{li2023text}. To simulate the production scenario where users review additional items over time, we use augmented review history for finetuning novelty models that includes both the  train and validation set products (\textit{Finetuning} dataset). In both cases, the test set remains identical and we ensure that the test set is never used during training (since the finetuning stage does not have a validation set for early stopping; we train models for a fixed number of epochs). 
\end{itemize}

\begin{table*}[ht]
\centering
    \begin{tabular}{l|ccc|cc|c}
        \toprule
        \textbf{Model} & \textbf{Novelty@50} & \textbf{Novelty@100} & \textbf{Novelty@200} & \textbf{Recall@100} & \textbf{Recall@200} & \textbf{Precision@3} \\
        \midrule 
        Base Model & -- &  -- & -- & 45.0 & 54.1 & 63.7 \\
        Supervised Finetuning  & 6.6 & 22.7 & 79.1  & \textbf{49.1} & 53.5 & 65.5 \\
        \pgr (Conservative) & \text{14.5} & {39.6} & {107.8} & \text{46.8} & \textbf{56.4} & \textbf{70.0} \\
        \pgr (Aggressive) & \textbf{32.8} & \textbf{70.2} & \textbf{151.8} & 27.6 & 37.7 & 51.8 \\
        \bottomrule
    \end{tabular}
    \caption{Novelty, recall and precision for \pgr  compared to supervised finetuning. \pgr (Conservative) obtains substantially high novelty with equal or better precision. \pgr (Aggressive) provides even higher novelty but with reduced accuracy. }
    \label{tab-novelty}
\end{table*}

\begin{table*}[ht]
    \centering
    \small
    \begin{tabular}{p{0.24\textwidth}|p{0.33\textwidth}|p{0.33\textwidth}}
            \toprule
           \textbf{Query} &  \textbf{Top-5 Ad Keywords (Base Model)} & \textbf{Top-5 Ad Keywords (\pgr (Conservative))} \\
           \midrule
           tankless water heater electric home depot & 
           home depot tankless water heater \par residential electric tankless water heater \par electrical  tankless water heater \par electric tankless water heater \par water heater electric tankless & 
         \textbf{instant water heater electric} \par \textbf{instant electric water heater} \par \textbf{domestic electrical appliances} \par \textbf{instant water heaters electric} \par \textbf{instant electric water heaters} \\
         \midrule
         
         file management pc & 
         computer file management \par
file management in windows \par
file management \par
file management software \par
file management system &  
computer file management \par
files management software \par 
\textbf{software to organize files} \par
file management software \par
\textbf{files software} \\
        \midrule

        hayan sunrise senior living & 
sunrise senior living home \par
sunrise seniors living \par
sunrise senior living wheaton il \par
sunrise senior care living \par
who owns sunrise senior living &   
\textbf{living facilities for seniors} \par
\textbf{senior living facilities} \par
\textbf{elder living} \par
\textbf{elderly living} \par
\textbf{elderly living facilities} \\
        \midrule 
        
         can i finance my child's college with a home equity loan & 
         home equity loan years \par
dcu home equity loan \par
home equity financing \par
10 year home equity loan \par
home equity loan in texas & 
home equality loans \par
\textbf{educational financing} \par
\textbf{household finance} \par
\textbf{home equity laons} \par
home equity laon \\
    \bottomrule
    \end{tabular}
    \caption{Example predictions from base model and \pgr. Bolded items are novel compared to top-200 from the base model.}
    \label{tab:qual}
\end{table*}

\noindent \textbf{Relevance Feedback.} For all datasets, we use GPT-3.5 as the reward model for providing relevance feedback during training. For the first dataset where the goal is produce keywords with more general intent than the query, we use the prompt,

\begin{quote}\it{
Given the query ''\{Query\}'', is the following query expressing a similar but more general intent, ``\{Keyword\}''? Please answer in a single word: Yes/No.}
\end{quote}

For the ORCAS dataset where the goal is to predict the top search result titles, we use the prompt,
\begin{quote}\it{
Given the query, ``\{Query\}'', is it possible that the document titled, ``\{WebPageTitle\}'', is relevant for the user\'s intent. Please answer in a single word: Yes/No.}
\end{quote}
The system prompt is the same in both cases, \textit{``You are an expert in understanding user intent from search engine queries.''}. 
For the AmazonReviews dataset, we use the system prompt, \textit{``You are an expert in understanding user interests from Amazon.com product browsing data.''}. The user prompt is given by,
\begin{quote}\it{
A user is browsing Amazon.com for the following products:\\ 
\{ListOfProducts\} \\ 
\\
Is this a relevant product that targets the same scientific interest?\\
Recommendation: \{CandidateProduct\} \\

Provide a brief reasoning and then the final answer within ``$\langle$Yes/No$\rangle$'' tag.}
\end{quote}

\noindent \textbf{Metrics.} We evaluate \pgr on the following offline metrics. 
\begin{itemize}[noitemsep,topsep=0pt,parsep=0pt,partopsep=0pt,leftmargin=*]
    \item \textbf{Novelty@k}: Novelty of top-k items compared to the base model, as defined in Definition~\ref{def:novelty}. For the query-keyword dataset, we use L=200 for the base model since each retrieval model sends roughly 200 keywords to the downstream ranking layer. For ORCAS and AmazonReviews dataset, we use L=50 since we observe that the relevance of predicted items decreases significantly beyond 50 items. 
    \item \textbf{Recall@k} is  the number of clicked \textless query, item\textgreater pairs from  test data that are in the top-k predictions. As novelty is optimized, recall over the top-k items should not decrease substantially. 
    \item \textbf{Precision@k}: While recall is an important metric, it is dependent on the available clicks in the test data for each query. It is possible that a model makes relevant predictions  but they are not counted since those \textless query, item\textgreater pairs do not exist in the test data. Hence, we use an advanced LLM, GPT-4 as the relevance evaluator for top-k items.  Note that we use a different LLM for evaluation than the one used in training because 1) A more capable model like GPT-4 can provide more reliable relevance feedback; 2) using a different model ensures fairness of evaluation when compared to other baselines. Moreover, we use standardized prompts that are used in the production system for evaluating retrieval models.  For each task, these prompts  have been validated against human-labelled data and they achieve more than 85\% accuracy, thus serving as a reliable approximation of human feedback.   
    The first task uses a prompt tuned for estimating phrase-match relevance of keyword and the second task uses a prompt tuned for estimating the general relevance of a query and a webpage title.
\end{itemize}
In addition, we evaluate \pgr using an A/B experiment for the first task of matching query to relevant ad keywords. 

\noindent \textbf{Baselines. } For each dataset, we compare \pgr to a supervised model initialized with the base model and trained on the same training data using the InfoNCE loss with random negatives (Eq.~\ref{eq:cont-learn}).  
All models are trained using Adam optimizer with a  learning rate of $10^{-5}$ and batch size of 128. \pgr (Aggressive) uses $M=5$. For the AmazonReviews dataset, we use the same hyperparameters as in \cite{li2023text}.

\begin{table*}[t]
\centering
    \begin{tabular}{l|ccc|c|c}
        \toprule
        \textbf{Model} & \textbf{Novelty@10} & \textbf{Novelty@20} & \textbf{Novelty@50}  & \textbf{Recall@50} & \textbf{Precision@1} \\
        \midrule 
         Base Pretrained Model & -- &  -- & --  & \textbf{81.0} & \textbf{70.0}\\
        Supervised Finetuning (3 epochs)  & 0.007 & 0.04 & 0.65    & {79.7} & 69.9 \\
        Supervised Finetuning (10 epochs) & 0.07 & 0.33 & 2.82 & 78.7 & 68.4\\
        \pgr (Conservative) & {0.17} & {0.47}  & 2.64 &  79.8 & 68.7 \\ 
        \pgr(Aggressive) & \textbf{0.33} & \textbf{0.85} & \textbf{4.03}  & 78.3 & 67.9 \\
        \bottomrule
    \end{tabular}
    \caption{Novelty, Recall and Precision on ORCAS. \pgr obtains substantial gains in novelty with minimal accuracy drop.}
    \label{tab-recall-orcas}
\end{table*}


\subsection{Query-keyword recommendation task}
\paragraph{Novelty. } For the query-keyword dataset, the goal is to increase the novelty of policy's recommendations. For training \pgr, we use the reward from Equation~\ref{eq:rewardfn}. The results are shown in Table~\ref{tab-novelty}. 
Compared to supervised finetuning on the same training set, \pgr leads to 2X-5X gains in novel keywords in top-50. In top-200, \pgr (Conservative) and \pgr (Aggressive)  obtain 108 and 152 novel keywords respectively compared to 79 from the supervised finetuned model. At the same time, recall of \pgr (Conservative) is almost the same as the base model. In fact, recall@200 is slighly higher than the base model. To check the quality of the \pgr models, we also evaluate precision@3 as evaluated by GPT-4. While we would expect the precision to decrease due to the novelty loss, we find that \pgr (Conservative) has a substantially higher precision than the base model. This may be possible because the novelty loss can encourage the model to move away from local minima and sometimes find a more optimal solution. In comparison, \pgr (Aggressive) suffers a significant drop in both recall and precision, indicating that novelty optimization has led to a decrease in the model's accuracy. The offline results indicate that \pgr (Conservative) is a good balance between novelty and accuracy.    

Table~\ref{tab:qual} shows qualitative results for a sample of the queries where \pgr led to novel keywords in the top-5 predictions. The base model tends to match keywords based on lexical overlap wheareas \pgr is able to find rephrases with the same meaning. 

\begin{table*}[ht]
\centering
    \begin{tabular}{l|ccc|c}
        \toprule
        \textbf{Model} & \textbf{Novelty@10} & \textbf{Novelty@20} & \textbf{Novelty@50}  & \textbf{Recall@50} \\
        \midrule 
         Base Pretrained Model & -- &  -- & --  & {21.9} \\
        Supervised Finetuning  & 1.9 & 5.8 & 24.4    & \textbf{22.6} \\
        \pgr (Conservative) & \textbf{3.6} & \textbf{9.5}  & \textbf{32.4} &  22.4\\ 
        \bottomrule
    \end{tabular}
    \caption{Novelty and Recall on the AmazonReviews dataset. Compared to supervised finetuning, \pgr obtains substantial gains in novelty with almost the same recall.}
    \label{tab-recall-amazon}
\end{table*}

\paragraph{A/B test. }Finally, we evaluate \pgr (Conservative) on real user traffic using an A/B experiment over a 10 day period. As mentioned before, the retrieval system is engineered such that keywords from a new algorithm (e.g., \pgr) are selected for downstream ranking layer (which in turn, may lead to user impressions) only if they are novel compared to the existing algorithm. By including \pgr in the retrieval pipeline, we observe a $1$\% increase in query-ad matching density, the average number of relevant ads selected per query (as determined by the downstream ranker). We also observe a  $0.14$\% increase in coverage, the fraction of queries for which relevant ads are shown to users. This indicates that the novelty optimization can help to match ads to the queries that other algos are not able to match relevant ads with. Finally, we also observe a 0.26\% increase in click yield, the number
of ad clicks per search query, indicating the new keywords recommended by the novelty optimization are well received by the real-world  users. While the absolute number may look small,  
an increase of $0.26\%$  can lead to a substantial impact when scaled to millions of users.

\subsection{ORCAS: Query-webpage matching}
To simulate the production setting, we train a supervised model on over 16M \textless query, webpage title\textgreater pairs as the \textit{base} model. Our goal is to produce top-k webpages that are novel wrt. top-50 webpages predicted by the base model, using a training set of 1M pairs. We first use the InfoNCE loss to finetune the base model over the 1M training dataset (Supervised Finetuning). \pgr model is finetuned using our proposed method using the base model as the initialization. 

Table~\ref{tab-recall-orcas} shows the results. We report results for two supervised finetuning models (epochs 3 and 10). In both models, the recall and precision decreases compared to base model, indicating that the 1M train set may lead to overfitting. \pgr (Conservative) also leads to a drop in recall but the corresponding novelty is significantly higher than that of a supervised model with similar recall. At comparable recall, \pgr (Conservative) obtains Novelty@50 of 2.64 compared to 0.65 for the supervised model (3 epochs). Overall, \pgr (Aggressive) obtains the highest novelty---on average, there are $4.03$ webpages in  top-50 predictions of \pgr (Conservative) that did not exist in top-50 predictions of the base model. The novelty is significantly higher than a supervised model (2.82, 10 epochs) with similar recall. Recall and precision of \pgr models decrease slightly (up to 3\%) compared to base model, which can be a reasonable tradeoff given  the gains in novelty.

\subsection{Amazon: User-Product recommendation}
Table~\ref{tab-recall-amazon} shows the novelty and recall metrics for the AmazonReviews dataset. Both Supervised Finetuning and  \pgr models are initialized with the base pretrained RecFormer model and trained on the finetuning set. 
Since these models are trained on an additional recent product from the user's history, both models obtain slighly higher recall than the base model. However, at the same recall, novelty of the \pgr (Conservative) model is significantly higher than the supervised model. On average, \pgr recommends 3.6 products in its top-10 list that are novel wrt. the top-50 recommendations from the base model, compared to 1.9 novel products for the supervised model.   As in the previous datasets, \pgr leads to improved novelty   while incurring a minimal loss in recall compared to the supervised model.

\section{Conclusion}
We presented a technique to optimize a non-differentiable, task-specific loss in information retrieval applications. We justified the binary-action formulation of the problem through theoretical and empirical results. On empirical recommendation datasets, the proposed technique leads to substantial gains in novelty of top-k items.

While we used the simple REINFORCE-based RL algorithm, future work can consider Actor-critic or proximal policy optimization  algorithms for optimizing novelty and how  they can be extended to large action spaces. Exploring robust reward functions in the presence of noisy LLM feedback is also an important future direction.


\bibliographystyle{ACM-Reference-Format}
\balance
\bibliography{novelty-rl}

\clearpage
\appendix
\section{Appendix: Proof of Propositions}

\subsection{Proposition 1}
To prove Proposition 1, we restate a result from ~\cite{mei2020global} for a MDP. Here $\gamma$ is the discount factor,  $t$ refers to steps of the optimization, and $d^\pi_\mu(\gamma) =\mathbf{E}_{s_0 \sim \mu}(1 - \gamma) \sum_{t=0}^\infty \gamma^t P(s_t = s|s_0, \pi)$ is the discounted state distribution.

\begin{theorem}[\cite{mei2020global}]
Let Assumption 1 hold and let $\{\theta_t\}_{t \geq 1}$ be
generated using  $\theta_{t+1} \leftarrow \theta_t + \eta \frac{\partial V^{\pi_{\theta_t}}(\mu)}{\partial \theta_t}$ where $V^{\pi_{\theta_t}}(\mu) = \mathbf{E}_{s\sim \mu}  V^{\pi_{\theta_t}}(s)$. Let $\eta = (1 - \gamma)3/8$, $c$ be the
positive constant  $c := \inf_{s\in S,t\geq 1} \pi_{\theta_t}(a^*(s)|s) > 0$. Then, for all $t \geq 1$,
\begin{equation}
   \mathbf{E}_{s \sim \mu} [ V^*(s) - V^{\pi_{\theta_t}}(s)] \leq \frac{16S}{c^2 (1-\gamma)^6 t} \norm{\frac{d^{\pi*}_\mu (\gamma)}{\mu}}_{\infty}^2 \norm{\frac{1}{\mu}}_{\infty}
\end{equation}
\end{theorem}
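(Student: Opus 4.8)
The plan is to follow the standard recipe for proving a global $O(1/t)$ rate for a \emph{smooth but non-concave} objective under a gradient-domination condition, specialized to the softmax policy-gradient objective $V^{\pi_\theta}(\mu)$. Because $\theta \mapsto V^{\pi_\theta}(\mu)$ is not concave, I cannot invoke convex rates directly; instead I would combine three ingredients: (i) smoothness of $\theta \mapsto V^{\pi_\theta}(\mu)$, to guarantee monotone ascent with a fixed step size; (ii) a non-uniform \L{}ojasiewicz (gradient-domination) inequality relating the gradient norm to the optimality gap; and (iii) a uniform lower bound $c>0$ on the probability the iterates place on the optimal action, which prevents the domination constant from collapsing. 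First I would establish smoothness: using the policy-gradient theorem and the explicit softmax Jacobian, I would bound the spectral norm of the Hessian $\partial^2 V^{\pi_\theta}(\mu)/\partial\theta^2$ by $\beta = 8/(1-\gamma)^3$. With the prescribed step size $\eta=(1-\gamma)^3/8 = 1/\beta$, the descent lemma then gives the per-step improvement $V^{\pi_{\theta_{t+1}}}(\mu) - V^{\pi_{\theta_t}}(\mu) \geq \tfrac{1}{2\beta}\norm{\partial V^{\pi_{\theta_t}}(\mu)/\partial\theta_t}_2^2$.

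Next I would prove the non-uniform \L{}ojasiewicz inequality
\[
\norm{\frac{\partial V^{\pi_\theta}(\mu)}{\partial \theta}}_2 \;\geq\; \frac{\min_s \pi_\theta(a^*(s)\mid s)}{\sqrt{S}}\,\norm{\frac{d^{\pi^*}_\mu(\gamma)}{\mu}}_\infty^{-1}\bigl(V^*(\mu) - V^{\pi_\theta}(\mu)\bigr).
\]
The derivation starts from the closed-form softmax gradient $\partial V^{\pi_\theta}(\mu)/\partial\theta_{s,a} = \tfrac{1}{1-\gamma}\,d^{\pi_\theta}_\mu(s)\,\pi_\theta(a\mid s)\,A^{\pi_\theta}(s,a)$, where $A^{\pi_\theta}$ is the advantage function and $d^{\pi_\theta}_\mu$ the discounted state distribution. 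I would lower-bound the gradient norm by the optimal-action component at each state, and then use the performance-difference lemma to rewrite $V^*(\mu) - V^{\pi_\theta}(\mu)$ as a $d^{\pi^*}_\mu$-weighted sum of advantages. The mismatch between $d^{\pi_\theta}_\mu$ and $d^{\pi^*}_\mu$ is precisely what produces the distribution-mismatch factor $\norm{d^{\pi^*}_\mu(\gamma)/\mu}_\infty$.

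The hard part will be step (iii): showing $c := \inf_{s,\,t\geq 1}\pi_{\theta_t}(a^*(s)\mid s) > 0$. On a non-concave landscape, gradient ascent could in principle drive the optimal-action probability toward zero at some state, which would make the \L{}ojasiewicz constant vanish and collapse the whole argument. I would handle this asymptotically rather than with an explicit estimate: since $V^{\pi_{\theta_t}}(\mu)$ is monotone and bounded it converges, which forces the gradients to vanish, so $\pi_{\theta_t}$ converges to a first-order stationary policy; I would then argue that any such limit must place full mass on optimal actions, so $\pi_{\theta_t}(a^*(s)\mid s)$ stays bounded away from zero along the entire trajectory and the infimum $c$ is strictly positive. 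This non-quantitative positivity of $c$ is exactly why the final bound is expressed through $c$ rather than an explicit constant.

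Finally I would assemble the pieces. Writing $\delta_t := V^*(\mu) - V^{\pi_{\theta_t}}(\mu)$ and substituting the \L{}ojasiewicz bound into the descent inequality yields a recursion of the form $\delta_{t+1} \leq \delta_t - \kappa\,\delta_t^2$ with $\kappa \asymp \tfrac{c^2(1-\gamma)^3}{S}\norm{d^{\pi^*}_\mu(\gamma)/\mu}_\infty^{-2}$, and a standard analysis of such inverse-quadratic recursions gives $\delta_t \leq 1/(\kappa t)$. Transferring the guarantee from the sampling distribution $\mu$ to the target via a value-mismatch bound contributes the extra factor $\norm{1/\mu}_\infty$ and one further power of $1/(1-\gamma)$; collecting all constants then gives the claimed bound $\tfrac{16S}{c^2(1-\gamma)^6 t}\norm{d^{\pi^*}_\mu(\gamma)/\mu}_\infty^2\,\norm{1/\mu}_\infty$.
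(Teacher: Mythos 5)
First, a framing point: the paper does not prove this statement at all. Theorem 1 is imported verbatim from \cite{mei2020global} (their Theorem 4) and used as a black box; the paper's own proofs begin only at Proposition 1, which specializes the theorem to the one-step ($\gamma=0$) setting. So your attempt can only be compared against the proof in the cited source. Measured against that, your skeleton is the right one and matches it closely: the $8/(1-\gamma)^3$-smoothness of $\theta \mapsto V^{\pi_\theta}(\mu)$ with step size $\eta = 1/\beta$, a non-uniform \L{}ojasiewicz inequality with constant $\min_s \pi_\theta(a^*(s)\mid s)/\sqrt{S}$ times a distribution-mismatch factor, the constant $c$, and the inverse-quadratic recursion $\delta_{t+1} \leq \delta_t - \kappa\,\delta_t^2$ giving the $O(1/(\kappa t))$ rate. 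The bookkeeping discrepancies are benign: the source's \L{}ojasiewicz lemma carries $\norm{d^{\pi^*}_\mu/d^{\pi_\theta}_\mu}_\infty$ rather than your $\norm{d^{\pi^*}_\mu(\gamma)/\mu}_\infty$, and the two are reconciled through $d^{\pi_\theta}_\mu \geq (1-\gamma)\mu$, which is where some of the powers of $1/(1-\gamma)$ in the final constant come from.

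The genuine gap is in your step (iii). You argue: the values converge, hence the gradients vanish, hence the iterates approach a first-order stationary policy, and ``any such limit must place full mass on optimal actions.'' That last inference is false under the softmax parameterization. The gradient components are $\frac{\partial V^{\pi_\theta}(\mu)}{\partial \theta_{s,a}} = \frac{1}{1-\gamma}\, d^{\pi_\theta}_\mu(s)\, \pi_\theta(a\mid s)\, A^{\pi_\theta}(s,a)$, and these vanish as $\pi_\theta$ approaches \emph{any} deterministic policy, including a suboptimal one: for non-selected actions the factor $\pi_\theta(a\mid s) \to 0$, and for the selected action the advantage $A^{\pi_\theta}(s,a) \to 0$. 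So suboptimal corners of the simplex are themselves (limiting) stationary points, and they are exactly the attractors the proof must rule out; stationarity alone gives you nothing. This is why the positivity of $c$ in \cite{mei2020global} rests on the asymptotic global-convergence theorem for softmax policy gradient (due to Agarwal et al. 2020), whose proof is long and delicate precisely because it must use the exploration assumption $\min_s \mu(s) > 0$ to exclude convergence to these spurious stationary points. Once that theorem is granted, your conclusion does follow: $\pi_{\theta_t}(a^*(s)\mid s) \to 1$ for every $s$, each finite iterate has strictly positive softmax probabilities, hence the infimum over $t$ and $s$ is positive. But as written, your sketch substitutes an invalid one-line argument for the hardest ingredient of the entire proof.
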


\firstprop*
\begin{proof}
Note that our RL setup is one-step, hence we can assume $\gamma = 0$. Then $V^\pi(s)$ from Theorem 1 simplifies to, 
$$ V^\pi (s) = \mathbf{E}_{a_t \sim \pi(.|s_t)} \sum_{t=0}^\infty \gamma^t r(s_t, a_t) = \mathbf{E}_{a_t \sim \pi(.|s_t)} r(s_0, a_0) = \mathbf{\pi}(s)^T\mathbf{r}(s)$$ 
where $\pi$ and $r$ in the last equation refer to the vectors of action probabilities and their rewards given a state. Further, since $\gamma=0$, $d^{\pi*}_\mu (\gamma) = \mu$. Hence, we can write Theorem 1 as, 

\begin{align}
 \mathbf{E}_{s \sim \mu} [ V^*(s) - V^{\pi_{\theta_t}}(s)] & 
 = \mathbf{E}_{s \sim \mu} [\mathbf{\pi^*}(s)^T\mathbf{r}(s) - \mathbf{\pi_{\theta_t}}(s)^T\mathbf{r}(s)] \\
 & = \mathbf{E}_{s \sim \mu} [(\mathbf{\pi^*}(s) - \mathbf{\pi_{\theta_t}}(s))^T\mathbf{r}(s)] \\
 & = \mathbf{E}_{x \sim \mathcal{D}} [(\mathbf{\pi^*}(x) - \mathbf{\pi_{\theta_t}}(x))^T\mathbf{r}(x)] \\
 &\leq \frac{16S}{c^2 (1-\gamma)^6 t} \norm{\frac{d^{\pi*}_\mu (\gamma)}{\mu}}_{\infty}^2 \norm{\frac{1}{\mu}}_{\infty} \\
 &\leq \frac{16S}{c^2 t}  \norm{\frac{1}{\mu}}_{\infty}
 \end{align}

where the third equality is because the initial state distribution is the distribution of queries in the training data. 

Now, using Assumption 2, the minimum initial probability for the optimal action $a^*$ is $\frac{1}{\rho A}$ for all states. Assuming that the gradient updates do not decrease the probability of the optimal action, $c = \inf_{s\in S,t\geq 1} \pi_{\theta_t}(a^*(s)|s) = \frac{1}{\rho A}$.
Substituting c in the above equation, we obtain the result. 
$$ \mathbf{E}_{x \sim \mathcal{D}} [(\mathbf{\pi^*}(x) - \mathbf{\pi_{\theta_t}}(x))^T\mathbf{r}(x)] \leq \frac{16SA^2 \rho^2}{ t}  \norm{\frac{1}{\mu}}_{\infty} $$
\end{proof}

\subsection{Proposition 2}
\secondprop*
\begin{proof}
Using the equation from Proposition 1 and substituting $S = SA$ and $A=2$ leads us to the result. 
\end{proof}

\end{document}